\newcommand{\st}{\medskip\noindent}
\def\ni{\noindent}
\def\beq{\begin{equation}}
\def\eeq#1{\label{#1}\end{equation}}
\long\def\COMMENT#1\ENDCOMMENT{\message{(Commented text...)}\par}
\def\and{ \ \wedge}
\newcommand{\calw}{\mathcal{L}}
\newcommand{\G}{G\"odel}
\newcommand{\IP}{{\bf IP}}
\newcommand{\EP}{{\bf EP}}
\newcommand{\HA}{{\bf HA}}
\newcommand{\EA}{{\bf EA}}
\newcommand{\INT}{{\it Int}} 
\newcommand{\EPI}{{\it Epi}} 
\newcommand{\I}[1]{\relax\ifmmode\mbox{\it#1}\else{\it#1}\fi}
\newtheorem{theorem}{Theorem}
\newtheorem{proof}{Proof}
\journal{Annals of Pure and Applied Logic}
\begin{document}
\begin{frontmatter}
\title{A Note on Flagg and Friedman's Epistemic and Intuitionistic Formal Systems}

\author{Alessandro Provetti,  Andrea Zucchellini}
\address{Dipartimento di Matematica e Informatica\\
Universit\`a degli Studi di Messina\\
I-98166 Messina, Italy
}



\vspace{-1cm}
\begin{abstract}
We report our findings on the properties of Flagg and Friedman's translation from Epistemic into Intuitionistic logic, which was proposed as the basis of a comprehensive proof method for the faithfulness of the \G~translation. 
We focus on the propositional case and raise the issue of the admissibility of the translated necessitation rule. 
Then, we contribute to Flagg and Friedman's program by giving an explicit proof of the soundness of their translation.
\end{abstract}

\begin{keyword}
Proof theory; Intuitionistic Logic; Modal Logic.
\end{keyword}
\end{frontmatter} 

\section{Introduction}\label{sec:intro} 
In their work \textit{Epistemic and Intuitionistic Formal Systems} (1986), Flagg and Friedman gave a new proof--as an alternative to Goodman's \cite{Goo84a}--for the faithfulness of the \G~translation $(\cdot)^T$ from intuitionistic logic into epistemic modal logic \textbf{S4} once $(\cdot)^T$ is applied to Heyting arithmetic. 
They aimed at a comprehensive proof method, to be deployed not only for arithmetic but for various formal systems whose intuitionistic and epistemic versions are connected by a \textit{faithful} translation: 
 
\[ 
\vdash_\EPI A^T \Longrightarrow~\vdash_\INT A,
\]

\ni
for any intuitionistic formula $A$. 
To this end, they proposed a class of embeddings of epistemic into intuitionistic logic, which are, in a certain sense, inverse to $(\cdot)^T.$ 

In order to prove the faithfulness of the \G~translation, Flagg and Friedman claim that their translation%
\footnote{Notation and conventions will adhere as much as possible to Flagg and Friedman's \cite{FlaFri86}.}%
, $(\cdot)^{(E)}_\Gamma$, is {\em sound} in both the propositional and the predicate calculus, and in arithmetic: i.e. for any epistemic formula $B$

\[
\vdash_\EPI B \Longrightarrow~\vdash_\INT B^{(E)}_\Gamma,
\] 

\ni
where $\Gamma$ is an arbitrary set of intuitionistic formulae and $E$ is a chosen formula from $\Gamma.$

In this article we discuss some issues about the proof of soundness of the generic Flagg-and-Friedman translation $(\cdot)^{(E)}_\Gamma$ 
from Epistemic Propositional logic, shortly \EP~(i.e. the modal logic \textbf{S4}), into Intuitionistic Propositional logic, \IP.
 
Let $\Gamma \neq \emptyset$ be a finite set of \IP -formulae and $E \in \Gamma$.
Then, for each formula $A$ in \EP, Flagg and Friedman define the formula $
A_\Gamma^{(E)}$ of \IP~(simply denoted by $A^{(E)}$ when this is not ambiguous) as follows%
\footnote{Flagg and Friedman use $\neg_EA$ as an abbreviation for $(A\rightarrow E).$}%
, recursively on the structure%
\footnote{Please consider that Flagg and Friedman treat the sign of false, $\bot$, as an atomic formula.} %
of $A:$

\medskip

$\begin{array}{lr}
\medskip
A_\Gamma^{(E)} = \neg_E\neg_EA
, \hbox{if $A$ is atomic}; & (B_0\wedge B_1)_\Gamma^{(E)} = (B_{0~\Gamma}^{(E)}\wedge B_{1~\Gamma}^{(E)}); \\

\medskip 
(B_0\vee B_1)_\Gamma^{(E)} = \neg_E\neg_E(B_{0~\Gamma}^{(E)} \vee B_{1~\Gamma}^{(E)}); & \ \ \ (B_0\rightarrow B_1)_\Gamma^{(E)} = (B_{0~\Gamma}^{(E)}\rightarrow B_{1~\Gamma}^{(E)}); \\

(\Box B)^{(E)}_\Gamma = \neg_E \neg_E \bigwedge_{C \in \Gamma}B^{(C)}_\Gamma.

\end{array} $ 

\st
Their basic result is the following theorem (Th. 1.8 in \cite{FlaFri86}), 
which states the soundness of their translation in propositional calculus.

\begin{theorem}\label{th:ff-correct}
Let $A_1, \dots , A_n, A$ be \EP-formulae (possibly, $n=0$). If
\[
A_1, \dots , A_n \vdash_\EP A,
\]

\ni
then, for any finite set $\Gamma$ of \IP-formulae and for any $E \in \Gamma,$
\[
A_{1~\Gamma}^{(E)}, \dots , A_{n~\Gamma}^{(E)} \vdash_\IP A_\Gamma^{(E)}.
\]

\end{theorem}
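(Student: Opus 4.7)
The plan is an induction on the length of an $\EP$-derivation, carried out in a Hilbert-style axiomatization of $\EP=\mathbf{S4}$ built from the usual classical propositional axioms together with the modal axioms $\mathbf{K}$, $\mathbf{T}$, $\mathbf{4}$ and the rules of modus ponens and necessitation. The hypothesis case is trivial. Modus ponens is preserved because $(B\to C)_\Gamma^{(E)}=B_\Gamma^{(E)}\to C_\Gamma^{(E)}$ holds by definition, so the translations of the two premises combine through a single intuitionistic MP. What remains is to show $\vdash_\IP X_\Gamma^{(E)}$ for every $\EP$-axiom $X$ and every $E\in\Gamma$, and that necessitation is preserved by the translation.

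Before attacking the axioms I would prove a \textbf{stability lemma} by a side induction on the $\EP$-formula $B$: for every $D\in\Gamma$, $\vdash_\IP \neg_D\neg_D B_\Gamma^{(D)}\to B_\Gamma^{(D)}$. In the atomic, disjunction and $\Box$ clauses the head connective of $B_\Gamma^{(D)}$ is already $\neg_D\neg_D$, so stability reduces to the intuitionistic triple-negation law. The conjunction and implication cases use the standard identities $\neg_D\neg_D(P\wedge Q)\to\neg_D\neg_D P\wedge\neg_D\neg_D Q$ and $\neg_D\neg_D(P\to Q)\to(P\to\neg_D\neg_D Q)$, combined with the inductive hypothesis. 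With stability available for every parameter in $\Gamma$, the translations of the classical propositional axioms become intuitionistic theorems by the familiar \G--Gentzen / Friedman $A$-translation pattern, here relativized to $E$ rather than $\bot$.

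The modal axioms are the most delicate part because $(\Box B)_\Gamma^{(E)}=\neg_E\neg_E\bigwedge_{C\in\Gamma}B_\Gamma^{(C)}$ swaps the parameter. For $\mathbf{T}$, the projection $\bigwedge_C B_\Gamma^{(C)}\to B_\Gamma^{(E)}$ (valid since $E\in\Gamma$) lifts monotonically to $\neg_E\neg_E\bigwedge_C B_\Gamma^{(C)}\to\neg_E\neg_E B_\Gamma^{(E)}$, and the stability lemma finishes the case. For $\mathbf{K}$, writing $P=\bigwedge_D(B_\Gamma^{(D)}\to C_\Gamma^{(D)})$, $Q=\bigwedge_D B_\Gamma^{(D)}$, $R=\bigwedge_D C_\Gamma^{(D)}$, one has $P\wedge Q\vdash_\IP R$, hence $\neg_E\neg_E(P\wedge Q)\to\neg_E\neg_E R$, and the intuitionistic equivalence $\neg_E\neg_E P\wedge\neg_E\neg_E Q\leftrightarrow\neg_E\neg_E(P\wedge Q)$ closes the argument. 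For $\mathbf{4}$, the inequality $\bigwedge_D B_\Gamma^{(D)}\to\bigwedge_D\neg_D\neg_D\bigwedge_C B_\Gamma^{(C)}$ is immediate from projection and from $X\to\neg_D\neg_D X$, and lifts under $\neg_E\neg_E$ by monotonicity.

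For necessitation the induction hypothesis must be read uniformly in the parameter: if $\vdash_\EP B$ with no open assumptions, then $\vdash_\IP B_\Gamma^{(C)}$ for \emph{every} $C\in\Gamma$; conjoining these and prefixing $\neg_E\neg_E$ yields $(\Box B)_\Gamma^{(E)}$. I expect the main obstacle to be not any single technical step but the discipline of keeping the two induction layers cleanly separated and invoking the stability lemma with the correct parameter at each $\Box$-related step; among the axiom cases, $\mathbf{K}$ is the one whose verification most visibly mixes $\neg_E\neg_E$-distributivity with the implications that live inside the $\Gamma$-indexed conjunction, and it is where one most easily loses track of which parameter is which.
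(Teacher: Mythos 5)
Your proposal is correct, and its one indispensable ingredient --- reading the induction hypothesis uniformly in the parameter, i.e.\ carrying along $\vdash_\IP B^{(C)}_\Gamma$ for \emph{every} $C\in\Gamma$ rather than only for the chosen $E$ --- is exactly the idea on which the paper's proof turns. But you package it quite differently. The paper stays inside the Prawitz-style natural deduction system in which the theorem is stated, and all of the modal difficulty is concentrated in the single case of the $\Box\:\textbf{I}$ rule applied under boxed open assumptions $\Box B_1,\dots,\Box B_n$: there the translated assumptions $\neg_E\neg_E\bigwedge_C B_i^{(C)}$ are strong enough to feed the induction hypothesis taken at an arbitrary parameter $D\in\Gamma$, and Lemmata 1.6/(i), (iv), (vii) of Flagg--Friedman reassemble the conclusion $\neg_E\neg_E\bigwedge_C B^{(C)}$. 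This is precisely the point where the side condition on $\Box\:\textbf{I}$ is used, and the paper's Section \ref{inadmissibility} shows it cannot be dispensed with. You instead move to a Hilbert presentation of $\mathbf{S4}$, where necessitation is applied only to theorems; the problematic case then evaporates (your necessitation step is two lines), and the modal work reappears, redistributed, in the verification of the translated $\mathbf{K}$, $\mathbf{T}$ and $\mathbf{4}$ axioms --- each of which you handle correctly, with the stability lemma (the paper's Lemma \ref{le:double_neg_elim}) doing for $\mathbf{T}$ and for the classical axioms what it must. The trade-off is that your route silently relies on the equivalence between the paper's natural deduction system (with the boxed-assumptions side condition) and local Hilbert-style $\mathbf{S4}$ consequence; this is standard, but it is an extra bridge the paper does not need, and if you present the proof this way you should state it explicitly, since the theorem is asserted for the natural deduction formulation.
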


\COMMENT
The authors do not give enough details are given for a straightforward proof that $(\cdot)^{(E)}_\Gamma$ is sound in general (i.e. irrespective of the choice of the finite set $\Gamma$ of intuitionistic formulae and the formula $E \in \Gamma$).
Moreover, the authors' suggestion to prove soundness by induction on length of derivations, if applied plainly, does not lead to a successful proof.
In fact, a plain and standard induction on length of derivations would presuppose that the generic $(\cdot)^{(E)}_\Gamma$ preserved admissibility of all epistemic inference rules; but unfortunately 
we found a counter-example to preservation of the necessitation rule, supported by an algebraic counter-model%
\footnote{%
For the intuitionistic algebraic semantics and its adequacy w.r.t. intuitionistic logic, please see e.g. \cite{RasSik63}.}. 
Thus, to optain a successful proof, we define a more structured application of induction (no more based on preservation of the necessitation rule) that succeeds in proving soundness in the propositional case.
\ENDCOMMENT

The authors suggest to prove Theorem \ref{th:ff-correct} by induction on the length of derivations, but do not give enough details for a straightforward proof.
In fact, a plain and standard induction on the length of derivations would presuppose that the generic $(\cdot)^{(E)}_\Gamma$ preserved the admissibility of all epistemic inference rules. 
Unfortunately that allows counter-examples to the preservation of the necessitation rule, as shown in Section \ref{inadmissibility} by means of an algebraic counter-model%
\footnote{%
For the intuitionistic algebraic semantics and its adequacy w.r.t. intuitionistic logic, please see e.g. \cite{RasSik63}.}. 
Thus, to optain a successful proof, we define in Section \ref{soundness_proof} a more structured application of induction which is not based on preservation of the necessitation rule and succeeds in proving soundness in the propositional case.
 
\COMMENT
Then, we survey the literature related to the F\&F-translation, from earlier works on Friedman's own {\em $A$-translation} to ones, by Fernandez and Inou\'e, that considered and disproved notable properties (other than soundness) relative to $f.$  
Finally, we envision a study on the predicate case.


\section{Preliminaries}\label{sec:prelim}
In this article we consider the Intuitionistic and the Epistemic Propositional calculus, denoted as \IP~and \EP~respectively, formulated as systems of natural deduction as in Prawitz \cite{Pra71}.

Notation and conventions will adhere as much as possible to Flagg and Friedman \cite{FlaFri86}. 

As it is common, we take $\bot$ (the symbol for \textit{falsity}) as a primitive zero-ary connective; however, following \cite{FlaFri86}, we consider it an atomic formula. 
As a result, 
we define the unary connective for negation ($\neg$) as follows:
\[
\neg A = (A\rightarrow \bot).
\]

\ni
In \EP, the symbol $\Box$ represents the modal operator for necessity (or {\em knowability}), as usual. 
We shall denote the sets of the {\em well-formed formulae} of the languages of \textbf{IP} and \textbf{EP} by $\calw_\IP$ and $\calw_\EP,$ respectively.
We shall use $p, q, r, ...$ as standing for propositional letters; $A, B, C, ...$ as standing for well-formed formulae (of a fixed calculus); $\Gamma, \Delta, \Theta, ...$ as denoting {\em finite} sets of formulae.

\IP~has the following primitive inference rules:

\st 
\begin{center}
$ (\wedge\:\textbf{I}) \ \ \frac{A_0 ~ A_1}{A_0\wedge A_1},  \ \ \ \ \ \ \ \ \ \ \ \ \ \ (\wedge\:\textbf{E})  \ \ \frac{A_0\wedge A_1}{A_i}, \ \ \ \ \ \ \ \ \ \ \ \ \ \ (\vee\:\textbf{I}) \ \ \frac{A_i}{A_0\vee A_1}, \ \ \ \ \ \ \ \ \ \ \ \ \ \ \ (\vee\:\textbf{E}) \ \ \frac{A_0\vee A_1 ~ \stackrel{[A_0]}{\stackrel{\vdots}{B}} ~ \stackrel{[A_1]}{\stackrel{\vdots}{B}}}{B},$
\end{center}

\begin{center}
$ (\rightarrow\:\textbf{I}) \ \ \frac{\stackrel{\stackrel{[A]}{\vdots}}{B}}{A\rightarrow B}, \ \ \ \ \ \ \ \ \ \ \ \ \ \ (\rightarrow\:\textbf{E}) \ \ \frac{A\rightarrow B ~  A}{B}, \ \ \ \ \ \ \ \ \ \ \ \ \ \ (\bot_\textbf{i}) \ \ \frac{\bot}{A}.$
\end{center}

\medskip

\EP~is gained from \IP~by omitting the rule $\bot_\textbf{i}$ and adding the following:

\st
\begin{center}
$ (\Box\:\textbf{I}) \ \ \frac{A}{\Box A}$ ~ if all open assumptions are of the form $\Box B,$
\end{center}

\begin{center}
$ (\Box\:\textbf{E}) \ \ \frac{\Box A}{A}, \ \ \ \ \ \ \ \ \ \ \ \ \ \ (\bot_\textbf{c}) \ \ \frac{\stackrel{\stackrel{[\neg A]}{\vdots}}{\bot}}{A}. $
\end{center}

\st
The $\Box$\textit{-introduction} rule, $\Box\:\textbf{I}$, is also known as \textit{necessitation}.%

\medskip
As usual, if $A_1, ... , A_n, B$ are formulae of $\calw_\IP$, we shall say that $B$ is deducible from $A_1, ... , A_n,$ and write $A_1, ... , A_n \vdash_\IP B$, if there is a derivation in \IP~that ends with $B$ and has $A_1, ... , A_n$ as open assumptions, and we shall say that $A$ and $B$ are (syntactically) equivalent, and write $A \vdash_\IP\dashv B,$ if $A \vdash_\IP B$ and $B \vdash_\IP A.$ Furthermore, we shall write $\Gamma, A \vdash_\IP B$ instead of $\Gamma \cup {A} \vdash_\IP B.$ Analogous notations hold for \EP. 

We consider now the \G~translation $T$ from $\calw_\IP$ into $\calw_\EP$ as presented, e.g., in \cite{RasSik63}.
For any formula $A$ in $\calw_\IP,$ the formula $T(A)$ in $\calw_\EP$ is defined by recursion on the structure of $A$ as follows:

\begin{itemize}
	\item $T(p) = \Box p,$ for each propositional letter $p;$

  \item $T(\bot) = \bot;$

  \item$T(A_0\wedge A_1) = T(A_0)\wedge T(A_1);$

  \item$T(A_0\vee A_1) = T(A_0)\vee T(A_1);$

  \item $T(A_0\rightarrow A_1) = \Box (T(A_0)\rightarrow T(A_1).$

\end{itemize}


As a result, we have $T(\neg A) = \Box\neg\ T(A).$ 

Let us remark the following important property of $T,$ called \textit{stability:} for each formula $A,$

\[
T(A) \vdash_\EP \dashv \Box T(A).
\]

\subsection{
Double relative negation} 
In order to illustrate how the F\&F-translation works at the propositional level, we adopt notation style and lemmata from \cite{FlaFri86}. 
From now on, let $A, B, C,$ and $E$ be arbitrary \IP -formulae. 
The following abbreviation is usefully introduced.

\begin{definition}\label{def:rel_neg}
\[
\neg_EA = (A\rightarrow E).
\]
\end{definition}

The resulting connective ($\neg_E)$ may be called $E$\textit{-negation} or \textit{relative negation w.r.t.} $E$, since negation shares some key properties with it.
In this regard, consider the following lemmata, drawn from Lemma 1.6 of \cite{FlaFri86}, 
except for Lemma \ref{le:contraposition}.

\begin{lemma}\label{le:double_neg} 
\[
A \vdash_\IP \neg_E\neg_EA.
\]
\end{lemma}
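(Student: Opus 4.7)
The plan is to give a short natural-deduction derivation in \IP\ directly, since $\neg_E\neg_E A$ unfolds by Definition \ref{def:rel_neg} to $((A\rightarrow E)\rightarrow E)$. So the goal reduces to showing $A \vdash_\IP (A\rightarrow E)\rightarrow E$, which is purely a matter of $\rightarrow$-introduction and $\rightarrow$-elimination; no rule specific to $\bot$ or the other connectives is needed, and the formula $E$ plays no special role beyond being the ``target'' of the relative negation.

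First I would keep $A$ as the (only) open assumption at the root of the derivation. Then I would open a temporary assumption $[A\rightarrow E]$, to be discharged later. By a single application of $(\rightarrow\!\textbf{E})$ to this temporary assumption together with $A$, I obtain $E$. At this point I discharge $[A\rightarrow E]$ by $(\rightarrow\!\textbf{I})$, yielding $(A\rightarrow E)\rightarrow E$, i.e.\ $\neg_E\neg_E A$, with $A$ still open. This is exactly the claimed deducibility $A \vdash_\IP \neg_E\neg_E A$.

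There is essentially no obstacle here: the statement is a one-line consequence of the definition of $\neg_E$ and the standard intuitionistic validity of $A \rightarrow ((A\rightarrow E)\rightarrow E)$, which holds already in minimal propositional logic (we never appeal to $\bot_\textbf{i}$, so the argument is insensitive to whether $E$ is $\bot$ or any other formula). The only care required is bookkeeping of which assumptions are discharged, to make sure that $A$ remains open while the auxiliary hypothesis $A\rightarrow E$ is closed off by the final $(\rightarrow\!\textbf{I})$ step.
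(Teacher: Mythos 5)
Your derivation is correct and is the standard one: unfold $\neg_E\neg_E A$ to $(A\rightarrow E)\rightarrow E$, apply $(\rightarrow\!\textbf{E})$ to the temporary assumption $A\rightarrow E$ together with the open assumption $A$, and discharge by $(\rightarrow\!\textbf{I})$. The paper gives no proof of its own (the lemma is imported from Lemma 1.6 of Flagg and Friedman), and your argument is exactly the expected one, including the correct observation that it already holds in minimal logic.
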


Notice that the opposite direction obviously does not hold in general.

\begin{lemma}\label{le:contraposition} 

\[
  A\rightarrow B \vdash_\IP \neg_EB\rightarrow \neg_EA; \ \
  A\rightarrow B \vdash_\IP \neg_E\neg_EA\rightarrow \neg_E\neg_EB
\]  

\end{lemma}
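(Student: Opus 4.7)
The plan is to handle the two implications in turn, deriving the second as a corollary of the first by iterating the construction. Throughout we work directly from the definition $\neg_E X = (X \rightarrow E)$, so both statements are really about chaining implications.

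For the first claim, $A \rightarrow B \vdash_\IP \neg_E B \rightarrow \neg_E A$, I would carry out the following natural deduction. Starting from the open assumption $A \rightarrow B$, I open two further assumptions: $B \rightarrow E$ (which will be discharged to give the outer implication) and $A$ (which will be discharged to give $\neg_E A = A \rightarrow E$). From $A \rightarrow B$ and $A$, one application of $\rightarrow\!\textbf{E}$ yields $B$; a second application of $\rightarrow\!\textbf{E}$, using $B \rightarrow E$ and $B$, yields $E$. Discharging $A$ by $\rightarrow\!\textbf{I}$ gives $A \rightarrow E$, i.e.\ $\neg_E A$; discharging $B \rightarrow E$ by $\rightarrow\!\textbf{I}$ gives $(B \rightarrow E) \rightarrow (A \rightarrow E)$, which is exactly $\neg_E B \rightarrow \neg_E A$.

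For the second claim, $A \rightarrow B \vdash_\IP \neg_E \neg_E A \rightarrow \neg_E \neg_E B$, I would simply iterate the first part. First, from $A \rightarrow B$ derive $\neg_E B \rightarrow \neg_E A$ as above. Then instantiate the first claim with $A$ replaced by $\neg_E B$ and $B$ replaced by $\neg_E A$: from $\neg_E B \rightarrow \neg_E A$ conclude $\neg_E \neg_E A \rightarrow \neg_E \neg_E B$, since the first claim only used the structure of the implication and made no assumption on what $A$ and $B$ are.

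There is no serious obstacle here: both statements are routine exercises in $\rightarrow\!\textbf{I}$ and $\rightarrow\!\textbf{E}$, and the second is really just the first applied twice. The only cosmetic point worth checking is that in the second application of the first claim the relative negation $\neg_E$ on both sides uses the \emph{same} formula $E$, which is fine because $E$ is fixed throughout Definition \ref{def:rel_neg}.
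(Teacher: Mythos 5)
Your proof is correct: the first claim is the standard contraposition derivation via $\rightarrow\!\textbf{I}$ and $\rightarrow\!\textbf{E}$, and the second follows by applying the first claim to $\neg_E B \rightarrow \neg_E A$ together with composition of derivations. The paper states Lemma \ref{le:contraposition} without proof (it is the one lemma in that list not drawn from Flagg and Friedman's Lemma 1.6), and your argument is exactly the routine one the authors evidently had in mind.
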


\begin{lemma}\label{le:triple_neg} 
\[
\neg_EA \vdash_\IP\dashv \neg_E\neg_E\neg_EA.
\]
\end{lemma}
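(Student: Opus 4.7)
The statement splits naturally into two directions, and both can be reduced to the preceding lemmas without any direct manipulation of the underlying implication $(A \rightarrow E)$.

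For the forward direction, $\neg_E A \vdash_\IP \neg_E \neg_E \neg_E A$, I would simply invoke Lemma \ref{le:double_neg} with the formula $\neg_E A$ substituted in place of $A$. The lemma gives $\neg_E A \vdash_\IP \neg_E \neg_E (\neg_E A)$, which is exactly what is required. This direction is essentially free.

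For the backward direction, $\neg_E \neg_E \neg_E A \vdash_\IP \neg_E A$, the plan is to derive it from Lemma \ref{le:contraposition} applied to the canonical implication $A \rightarrow \neg_E \neg_E A$. More precisely, I would first observe that Lemma \ref{le:double_neg} gives $A \vdash_\IP \neg_E \neg_E A$, so by $\rightarrow\!\textbf{I}$ one obtains $\vdash_\IP A \rightarrow \neg_E \neg_E A$. Feeding this into the first clause of Lemma \ref{le:contraposition} (instantiated with $B := \neg_E \neg_E A$) yields $\vdash_\IP \neg_E \neg_E \neg_E A \rightarrow \neg_E A$. One final application of $\rightarrow\!\textbf{E}$ discharges the assumption $\neg_E \neg_E \neg_E A$ and produces $\neg_E A$.

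I do not expect any real obstacle: the whole argument is a standard mimicking of the classical triple-negation reduction, where the role of absurdity $\bot$ is played by the fixed formula $E$. The only point to be mildly careful about is that Lemma \ref{le:contraposition} is stated as an entailment between implications rather than a rule on sequents, so I must first turn the deduction $A \vdash_\IP \neg_E \neg_E A$ into an implication by $\rightarrow\!\textbf{I}$ before I can feed it into the contraposition lemma.
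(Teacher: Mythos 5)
Your proof is correct. Both directions check out: the forward direction is indeed just Lemma \ref{le:double_neg} instantiated at $\neg_E A$, and the backward direction follows cleanly by contraposing the theorem $A \rightarrow \neg_E\neg_E A$ via the first clause of Lemma \ref{le:contraposition}, noting that $\neg_E(\neg_E\neg_E A)$ is literally $\neg_E\neg_E\neg_E A$. The paper itself gives no proof of this lemma --- it is cited from Lemma 1.6 of Flagg and Friedman --- so there is no in-paper argument to compare against, but your derivation is the standard triple-negation reduction with $E$ playing the role of $\bot$, and it uses only lemmas stated earlier in the paper, so no circularity arises.
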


\begin{lemma}\label{le:2_neg_intro}
The following, called {\rm double $E$-negation} rule, is admissible in \IP: 

\st
\[ (\neg_E\neg_E) \ \ \ \frac{\neg_E\neg_EA ~ ~ ~ \stackrel{[A]}{\stackrel{\vdots}{B}}}{\neg_E\neg_EB} \]
\end{lemma}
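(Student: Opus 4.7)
The plan is to reduce the rule to an instance of modus ponens sandwiched between an application of $(\rightarrow\textbf{I})$ and the already established contraposition lemma (Lemma \ref{le:contraposition}). The idea is that the hypothetical subderivation of $B$ from $A$ can first be internalised as an implication $A\rightarrow B$; contraposition then lifts this implication to the level of double $E$-negations, at which point the premise $\neg_E\neg_E A$ discharges the antecedent.

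Concretely, suppose we are given a derivation $\mathcal{D}_1$ of $\neg_E\neg_E A$ from some set of open assumptions $\Gamma_1$, together with a derivation $\mathcal{D}_2$ of $B$ whose open assumptions include a distinguished occurrence of $A$ together with further assumptions $\Gamma_2$. First, I would apply $(\rightarrow\textbf{I})$ to $\mathcal{D}_2$, discharging the occurrence of $A$, to produce a derivation of $A\rightarrow B$ from $\Gamma_2$. Next, I would splice in the second clause of Lemma \ref{le:contraposition}, which turns any derivation of $A\rightarrow B$ into a derivation of $\neg_E\neg_E A\rightarrow \neg_E\neg_E B$; the result is a derivation of this latter implication from $\Gamma_2$. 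Finally, combining this with $\mathcal{D}_1$ via $(\rightarrow\textbf{E})$ yields the desired derivation of $\neg_E\neg_E B$ from $\Gamma_1\cup \Gamma_2$, and no undischarged assumption of the form $A$ remains. Since the entire construction stays within \IP, the rule is admissible.

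I do not anticipate a genuine obstacle: the real content of the argument lies in Lemma \ref{le:contraposition}, which is already available. The only minor bookkeeping point is to make sure that $A$ is actually discharged by the $(\rightarrow\textbf{I})$ step and that no spurious assumption is introduced when splicing the derivation of $\neg_E\neg_E A\rightarrow \neg_E\neg_E B$ provided by Lemma \ref{le:contraposition}. Both checks are immediate from the definition of \IP-derivation as given in Section \ref{sec:prelim}.
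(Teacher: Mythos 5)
Your proof is correct. Note that the paper itself states this lemma without proof, importing it from Lemma 1.6 of Flagg and Friedman, so there is no in-paper argument to compare against; your reduction --- internalise the subderivation via $(\rightarrow\textbf{I})$, lift it with the second clause of Lemma \ref{le:contraposition}, then discharge the antecedent with $(\rightarrow\textbf{E})$ against the premise $\neg_E\neg_E A$ --- is the standard way to establish admissibility here, and there is no circularity since Lemma \ref{le:contraposition} is provable directly from $(\rightarrow\textbf{I})$ and $(\rightarrow\textbf{E})$ without appeal to the double $E$-negation rule.
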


\begin{lemma}\label{le:2_neg_con} 
\[
\neg_E\neg_E(A\wedge B) \vdash_\IP\dashv (\neg_E\neg_EA\wedge\neg_E\neg_EB).
\]

\end{lemma}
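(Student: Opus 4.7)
The plan is to prove the two directions of the equivalence separately, using only the previously established lemmata on relative negation.

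For the forward direction $\neg_E\neg_E(A\wedge B) \vdash_\IP \neg_E\neg_E A \wedge \neg_E\neg_E B$, I would apply Lemma \ref{le:contraposition} twice. Starting from the trivial derivations $A\wedge B \vdash_\IP A$ and $A\wedge B \vdash_\IP B$ (by $\wedge\:\textbf{E}$) and discharging to get $\vdash_\IP (A\wedge B)\rightarrow A$ and $\vdash_\IP (A\wedge B)\rightarrow B$, the second clause of Lemma \ref{le:contraposition} yields $\neg_E\neg_E(A\wedge B)\vdash_\IP \neg_E\neg_E A$ and $\neg_E\neg_E(A\wedge B)\vdash_\IP \neg_E\neg_E B$; a single application of $\wedge\:\textbf{I}$ closes the case.

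For the converse $\neg_E\neg_E A \wedge \neg_E\neg_E B \vdash_\IP \neg_E\neg_E(A\wedge B)$, I would apply the double $E$-negation rule (Lemma \ref{le:2_neg_intro}) twice in a nested fashion. First, under the working hypothesis $A$, the subderivation $B \vdash_\IP A\wedge B$ (by $\wedge\:\textbf{I}$) together with the premise $\neg_E\neg_E B$ feeds Lemma \ref{le:2_neg_intro} to produce $\neg_E\neg_E(A\wedge B)$, with $A$ still open. Discharging $A$ gives a derivation of shape $A \vdash_\IP \neg_E\neg_E(A\wedge B)$; combining this with the premise $\neg_E\neg_E A$, a second application of Lemma \ref{le:2_neg_intro} yields $\neg_E\neg_E\neg_E\neg_E(A\wedge B)$. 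Finally, Lemma \ref{le:triple_neg} instantiated with $\neg_E(A\wedge B)$ in place of $A$ collapses the outer double negation, giving exactly $\neg_E\neg_E(A\wedge B)$.

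The main obstacle is bookkeeping: making sure the two nested invocations of Lemma \ref{le:2_neg_intro} are set up with the correct discharged hypotheses, and recognising that the spurious quadruple negation produced at the end is removed by instantiating Lemma \ref{le:triple_neg} appropriately rather than by trying to avoid it up front. As a cross-check, the same converse also admits a transparent direct derivation: assume $\neg_E(A\wedge B)$, then assume $A$ and then $B$ to derive $A\wedge B$ and hence $E$ by $\rightarrow\:\textbf{E}$, obtaining $\neg_E B$, and apply $\neg_E\neg_E B$ to get $E$; this yields $\neg_E A$, and a final application of $\neg_E\neg_E A$ gives $E$, so that discharging $\neg_E(A\wedge B)$ produces $\neg_E\neg_E(A\wedge B)$.
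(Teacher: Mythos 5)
Your proof is correct. Note, however, that the paper itself gives no proof of this lemma: it is imported verbatim from Lemma 1.6 of Flagg and Friedman \cite{FlaFri86}, so there is no in-paper argument to compare yours against. On its own terms your derivation is sound: the forward direction via the second clause of Lemma \ref{le:contraposition} applied to $(A\wedge B)\rightarrow A$ and $(A\wedge B)\rightarrow B$ is exactly right, and the converse via two nested applications of Lemma \ref{le:2_neg_intro} followed by Lemma \ref{le:triple_neg} (instantiated at $\neg_E(A\wedge B)$ to collapse the quadruple relative negation) goes through. One cosmetic remark: the phrase ``discharging $A$'' in the converse is misleading --- what you actually need, and what you in fact have, is a derivation of $\neg_E\neg_E(A\wedge B)$ with $A$ still \emph{open}, so that it can serve as the minor premise of the second application of Lemma \ref{le:2_neg_intro}. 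Your closing direct derivation (assume $\neg_E(A\wedge B)$, peel off $A$ and $B$, and cash in the two double-negated premises) is the shortest route and would suffice by itself.
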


\begin{lemma}\label{le:2_neg_dis} 
\[
\neg_E\neg_E(A\vee B) \vdash_\IP\dashv \neg_E\neg_E(\neg_E\neg_EA\vee \neg_E\neg_EB).
\]
\end{lemma}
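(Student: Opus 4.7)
The plan is to prove each direction by the same template: first derive the inner implication between the disjunctions (without the outer $\neg_E\neg_E$) using ordinary propositional reasoning together with the basic facts about $\neg_E$, and then push it under the double $E$-negation via Lemma~\ref{le:2_neg_intro}. The only asymmetry is that the converse direction leaves us with one extra pair of $\neg_E\neg_E$ that has to be absorbed using the triple negation Lemma~\ref{le:triple_neg}.

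For the forward direction $\neg_E\neg_E(A\vee B) \vdash_\IP \neg_E\neg_E(\neg_E\neg_EA\vee \neg_E\neg_EB)$, I would first note that by Lemma~\ref{le:double_neg} we have $A\vdash_\IP \neg_E\neg_EA$, hence by $\vee$-I also $A\vdash_\IP \neg_E\neg_EA\vee\neg_E\neg_EB$; symmetrically $B\vdash_\IP \neg_E\neg_EA\vee\neg_E\neg_EB$; so $\vee$-E gives $A\vee B\vdash_\IP \neg_E\neg_EA\vee\neg_E\neg_EB$. I then apply the double $E$-negation rule (Lemma~\ref{le:2_neg_intro}) to the premise $\neg_E\neg_E(A\vee B)$ together with this derivation, which yields $\neg_E\neg_E(\neg_E\neg_EA\vee \neg_E\neg_EB)$.

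For the backward direction, I would start from $A\vdash_\IP A\vee B$ (by $\vee$-I) and apply Lemma~\ref{le:contraposition} to get $\neg_E\neg_EA\vdash_\IP \neg_E\neg_E(A\vee B)$; symmetrically for $B$. Then $\vee$-E gives $\neg_E\neg_EA\vee\neg_E\neg_EB\vdash_\IP \neg_E\neg_E(A\vee B)$. Applying Lemma~\ref{le:2_neg_intro} to the premise $\neg_E\neg_E(\neg_E\neg_EA\vee\neg_E\neg_EB)$ and this derivation yields the quadruply-negated formula $\neg_E\neg_E\neg_E\neg_E(A\vee B)$. Finally, invoking Lemma~\ref{le:triple_neg} with its parameter instantiated as $\neg_E(A\vee B)$ collapses the outer pair of negations and delivers $\neg_E\neg_E(A\vee B)$.

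I do not foresee a genuine obstacle: both directions reduce to routine $\vee$-I/$\vee$-E reasoning once Lemmata~\ref{le:double_neg}, \ref{le:contraposition}, \ref{le:triple_neg} and \ref{le:2_neg_intro} are in hand. If anything, the delicate point is bookkeeping in the converse direction to ensure that the extra $\neg_E\neg_E$ introduced by Lemma~\ref{le:2_neg_intro} is exactly what Lemma~\ref{le:triple_neg} removes; that is why the instantiation of Lemma~\ref{le:triple_neg} has to be chosen as $\neg_E(A\vee B)$ rather than $A\vee B$ itself.
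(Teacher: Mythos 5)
Your proof is correct. Note that the paper does not actually prove this lemma---it is imported, stated without proof, from Lemma 1.6 of Flagg and Friedman---so there is no in-paper argument to compare against; your derivation is the natural one built from the paper's own toolkit, and each step checks out: the $\vee$-introduction/elimination reasoning in both directions, the use of Lemma~\ref{le:2_neg_intro} to pass the inner implication under the double $E$-negation, and the instantiation of Lemma~\ref{le:triple_neg} at $\neg_E(A\vee B)$ to collapse $\neg_E\neg_E\neg_E\neg_E(A\vee B)$ back to $\neg_E\neg_E(A\vee B)$.
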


\subsection{Useful lemmata}
In order to simplify what follows, here we prove four new lemmata.

\begin{lemma}\label{le:double_double}
\[
\neg_E\neg_E A \vdash_\IP \neg_E\neg_E\neg_C\neg_C A.
\]
\end{lemma}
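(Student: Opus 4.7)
The plan is to derive the conclusion by combining Lemma \ref{le:double_neg} (double $E$-negation introduction, here instantiated with $C$) with the second half of Lemma \ref{le:contraposition} (monotonicity of $\neg_E\neg_E(\cdot)$ under implication). The key observation is that $\neg_E\neg_E(\cdot)$ behaves monotonically with respect to intuitionistic provability, and the map $A \mapsto \neg_C\neg_C A$ is $\IP$-entailed by the identity, so pushing the identity through the monotone wrapper $\neg_E\neg_E(\cdot)$ should give us the desired chain.

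Concretely, the first step is to note that by Lemma \ref{le:double_neg} (applied with $C$ in place of $E$) we have $A \vdash_\IP \neg_C\neg_C A$. Since this derivation has only $A$ as an open assumption, by a single application of $(\rightarrow\textbf{I})$ we obtain $\vdash_\IP A \rightarrow \neg_C\neg_C A$. The second step is to feed this implication into the second clause of Lemma \ref{le:contraposition}, with $B := \neg_C\neg_C A$, which yields $\vdash_\IP \neg_E\neg_E A \rightarrow \neg_E\neg_E \neg_C\neg_C A$. A final application of $(\rightarrow\textbf{E})$ with the hypothesis $\neg_E\neg_E A$ delivers $\neg_E\neg_E \neg_C\neg_C A$, as required.

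No serious obstacle is expected: the argument is a one-line combination of already-established lemmata, and the only point worth checking carefully is that the appeal to Lemma \ref{le:contraposition} is legitimate even though $E$ and $C$ may differ—but inspection of the statement shows that the lemma is parametric in the negation target, so using $\neg_E\neg_E$ as the outer wrapper while $\neg_C\neg_C$ appears inside causes no interference.
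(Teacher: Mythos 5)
Your proof is correct and follows essentially the same route as the paper: both arguments first apply Lemma \ref{le:double_neg} (with $C$ in place of $E$) to get $A \vdash_\IP \neg_C\neg_C A$ and then lift this through the monotone wrapper $\neg_E\neg_E(\cdot)$. The only cosmetic difference is that you invoke the implicational form of monotonicity (the second clause of Lemma \ref{le:contraposition}, together with $\rightarrow\textbf{I}$ and $\rightarrow\textbf{E}$) where the paper cites the rule form (Lemma \ref{le:2_neg_intro}); the two are interderivable and the substance of the argument is identical.
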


\begin{proof}
This is an immediate consequence of Lemmata \ref{le:double_neg} and \ref{le:2_neg_intro}.
\end{proof}

\begin{lemma}\label{le:double_neg_imp}
\[
\neg_E\neg_E (A\rightarrow B) \vdash_\IP (\neg_E\neg_E A\rightarrow \neg_E\neg_E B).
\]
\end{lemma}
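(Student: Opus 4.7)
My plan is to unfold $\neg_E X$ as $(X\rightarrow E)$ and give a direct natural deduction derivation, since the statement essentially says that intuitionistic implication commutes (one way) with $E$-negation. After assuming $\neg_E\neg_E(A\rightarrow B)$, I open the goal implication by assuming $\neg_E\neg_E A$ and aim to derive $\neg_E\neg_E B$; unfolding again, I further assume $B\rightarrow E$ and reduce the task to deriving $E$.

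The key idea is to manufacture a proof of $(A\rightarrow B)\rightarrow E$ from the open assumption $B\rightarrow E$, so that one application of $\rightarrow$-elimination against $\neg_E\neg_E(A\rightarrow B)$ yields $E$. To this end I temporarily assume $A\rightarrow B$; combining with $B\rightarrow E$ under an extra assumption of $A$ and then discharging that $A$ via $\rightarrow$-introduction produces $A\rightarrow E$; applying this to $\neg_E\neg_E A$ gives $E$. Discharging $A\rightarrow B$ delivers $(A\rightarrow B)\rightarrow E$, and plugging this into $\neg_E\neg_E(A\rightarrow B)$ gives $E$ as required. Three final $\rightarrow$-introductions discharging $B\rightarrow E$, then $\neg_E\neg_E A$, close the proof.

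A more conceptual alternative would be to apply Lemma \ref{le:contraposition} to the assumption $[A\rightarrow B]$ to get $\neg_E\neg_E A\rightarrow\neg_E\neg_E B$, then use the double $E$-negation rule (Lemma \ref{le:2_neg_intro}) to upgrade $\neg_E\neg_E(A\rightarrow B)$ into $\neg_E\neg_E(\neg_E\neg_E A\rightarrow\neg_E\neg_E B)$. One would then want to strip the outer $\neg_E\neg_E$ from an implication whose consequent is itself $E$-stable, but this last step is essentially the statement we are proving. I therefore prefer the direct route.

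I do not anticipate a real obstacle here: the derivation uses only the primitive rules $(\rightarrow\mathbf{I})$ and $(\rightarrow\mathbf{E})$ of \IP. The only point requiring care is bookkeeping of the discharged assumptions, namely ensuring that $A\rightarrow B$ is discharged before $B\rightarrow E$, which is discharged before $\neg_E\neg_E A$, so that the resulting proof figure is a legal intuitionistic natural deduction of $\neg_E\neg_E A\rightarrow\neg_E\neg_E B$ from the single open assumption $\neg_E\neg_E(A\rightarrow B)$.
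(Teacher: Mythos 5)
Your direct derivation is correct: the chain of assumptions $A\rightarrow B$, $A$ (discharged to give $A\rightarrow E$, fed to $\neg_E\neg_E A$) yields $(A\rightarrow B)\rightarrow E$, and feeding that to $\neg_E\neg_E(A\rightarrow B)$ produces $E$, after which the remaining $\rightarrow$-introductions close the proof figure legally. This is essentially the paper's argument with the packaging removed: the paper starts from $(A\rightarrow B), A\vdash_\IP B$ and invokes the double $E$-negation rule (Lemma \ref{le:2_neg_intro}) to lift both premisses under $\neg_E\neg_E$, which unfolds to exactly the primitive-rule derivation you wrote out.
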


\begin{proof}
By $\rightarrow\textbf{E},$ we have $(A\rightarrow B), A \vdash_\IP B$; hence, for Lemma \ref{le:2_neg_intro}, we obtain $\neg_E\neg_E(A\rightarrow B), \neg_E\neg_EA \vdash_\IP \neg_EB$; finally, the result follows by $\rightarrow\textbf{I}.$
\end{proof}

\begin{lemma}\label{le:imp_double_neg}
\[
(\neg_E\neg_E A\rightarrow \neg_E\neg_E B) \vdash_\IP\dashv \neg_E\neg_E(\neg_E\neg_E A\rightarrow \neg_E\neg_E B).
\]
\end{lemma}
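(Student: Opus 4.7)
The plan is to prove the two directions of $\vdash_\IP\dashv$ separately, with only the forward implication requiring real work.

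The right-to-left inclusion, that $(\neg_E\neg_E A \rightarrow \neg_E\neg_E B) \vdash_\IP \neg_E\neg_E(\neg_E\neg_E A \rightarrow \neg_E\neg_E B)$, is immediate: it is simply Lemma \ref{le:double_neg} instantiated with the formula $\neg_E\neg_E A \rightarrow \neg_E\neg_E B$ in place of $A$.

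For the forward direction, I would begin by invoking Lemma \ref{le:double_neg_imp} with $\neg_E\neg_E A$ and $\neg_E\neg_E B$ in the roles of $A$ and $B$, which yields
\[
\neg_E\neg_E(\neg_E\neg_E A \rightarrow \neg_E\neg_E B) \vdash_\IP (\neg_E\neg_E\neg_E\neg_E A \rightarrow \neg_E\neg_E\neg_E\neg_E B).
\]
From here the task is to reduce the quadruple $E$-negations on both sides back to double ones. I would assume $\neg_E\neg_E A$ (to be discharged later by $\rightarrow\textbf{I}$), apply Lemma \ref{le:double_neg} to obtain $\neg_E\neg_E\neg_E\neg_E A$, and then use $\rightarrow\textbf{E}$ on the implication just displayed to derive $\neg_E\neg_E\neg_E\neg_E B$. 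Finally I would appeal to Lemma \ref{le:triple_neg} with $X := \neg_E B$: parsing $\neg_E\neg_E\neg_E\neg_E B$ as $\neg_E\neg_E\neg_E(\neg_E B)$, the lemma identifies it with $\neg_E(\neg_E B)$, i.e.\ $\neg_E\neg_E B$. A closing application of $\rightarrow\textbf{I}$ discharging $\neg_E\neg_E A$ delivers the conclusion.

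The only subtle step is the last one: one must notice that the correct way to apply Lemma \ref{le:triple_neg} to $\neg_E\neg_E\neg_E\neg_E B$ is to peel off \emph{three} outer $\neg_E$'s, treating the innermost $\neg_E B$ as a single unit. Once this bookkeeping is in place, the proof is a routine chaining of Lemmata \ref{le:double_neg}, \ref{le:triple_neg} and \ref{le:double_neg_imp}, with no essential obstacle.
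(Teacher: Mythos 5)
Your proof is correct and follows essentially the same route as the paper's: the easy direction is an instance of Lemma \ref{le:double_neg}, and the other direction applies Lemma \ref{le:double_neg_imp} to $\neg_E\neg_E A$ and $\neg_E\neg_E B$ and then collapses the resulting quadruple $E$-negations via Lemma \ref{le:triple_neg} (you merely unfold the substitution of equivalents into an explicit $\rightarrow\textbf{I}$/$\rightarrow\textbf{E}$ derivation). Note only that your labels are swapped relative to the paper's reading of $\vdash_\IP\dashv$ --- what you call the ``right-to-left inclusion'' is its left-to-right direction --- but the mathematics is unaffected.
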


\begin{proof}
The 
left-to-right direction follows from Lemma \ref{le:double_neg}; the opposite direction is obtained from Lemma \ref{le:double_neg_imp} by substituting equivalent formulae thanks to Lemma \ref{le:triple_neg}.
\end{proof}

\begin{lemma}\label{!}
\[
(A\rightarrow \neg_E\neg_EB) \vdash_\IP\dashv (\neg_E\neg_EA\rightarrow \neg_E\neg_EB).
\]

\end{lemma}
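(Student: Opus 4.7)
The statement is an equivalence, so the plan is to prove each direction separately, both times using only the double-$E$-negation machinery already established.

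For the right-to-left direction, I would assume $\neg_E\neg_E A \rightarrow \neg_E\neg_E B$ and, with the goal of applying $\rightarrow\textbf{I}$, open a further assumption $A$. Lemma \ref{le:double_neg} immediately gives $A \vdash_\IP \neg_E\neg_E A$, after which $\rightarrow\textbf{E}$ with the main assumption produces $\neg_E\neg_E B$. Discharging $A$ via $\rightarrow\textbf{I}$ yields $A \rightarrow \neg_E\neg_E B$. This direction is essentially a one-line application of the fact that double $E$-negation is weaker than the original formula.

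For the left-to-right direction, starting from $A \rightarrow \neg_E\neg_E B$ I would apply the second half of Lemma \ref{le:contraposition} (taking $B$ there to be $\neg_E\neg_E B$) to obtain
\[
\neg_E\neg_E A \rightarrow \neg_E\neg_E\neg_E\neg_E B.
\]
Now the consequent is a quadruple $E$-negation of $B$. By Lemma \ref{le:triple_neg} applied to the formula $\neg_E B$ (that is, setting $A := \neg_E B$ there), we have
\[
\neg_E\neg_E B \vdash_\IP\dashv \neg_E\neg_E\neg_E\neg_E B,
\]
and substituting the equivalent $\neg_E\neg_E B$ for $\neg_E\neg_E\neg_E\neg_E B$ in the conclusion gives the desired $\neg_E\neg_E A \rightarrow \neg_E\neg_E B$.

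I do not expect any genuine obstacle here: both directions reduce to routine manipulations with the contraposition and triple-negation lemmata. The only minor care needed is, in the second direction, to justify the substitution of equivalents under an implication; if one prefers not to invoke a general substitution principle, the step can be replaced by assuming $\neg_E\neg_E A$, deriving $\neg_E\neg_E\neg_E\neg_E B$ by $\rightarrow\textbf{E}$, and then explicitly applying Lemma \ref{le:triple_neg} to collapse the four negations to two, discharging the assumption at the end.
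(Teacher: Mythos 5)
Your proposal is correct and follows essentially the same route as the paper's own proof: the right-to-left direction via Lemma \ref{le:double_neg} and $\rightarrow\textbf{I}$/$\rightarrow\textbf{E}$, and the left-to-right direction by pushing the double $E$-negation through the implication to get a quadruple negation of $B$ and then collapsing it with Lemma \ref{le:triple_neg}. The only cosmetic difference is that you invoke the second half of Lemma \ref{le:contraposition} for the monotonicity step where the paper uses Lemma \ref{le:2_neg_intro} directly, and these are interchangeable here.
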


\begin{proof}
As for the left-to-right direction, by $\rightarrow\textbf{E}$ we have $(A\rightarrow \neg_E\neg_EB), A \vdash_\IP \neg_E\neg_EB,$ therefore for Lemma \ref{le:2_neg_intro} $(A\rightarrow \neg_E\neg_EB), \neg_E\neg_EA \vdash_\IP \neg_E\neg_E\neg_E\neg_EB,$ hence by Lemma \ref{le:triple_neg} and the composition principle $(A\rightarrow \neg_E\neg_EB), \neg_E\neg_EA \vdash_\IP \neg_E\neg_EB;$ the result follows by $\rightarrow\textbf{I}.$ As for the opposite direction, we have by $\rightarrow\textbf{E}$ that $(\neg_E\neg_EA\rightarrow \neg_E\neg_EB), \neg_E\neg_EA \vdash_\IP \neg_E\neg_EB;$ but for Lemma \ref{le:double_neg} $A \vdash_\IP \neg_E\neg_EA,$ hence for composition $(\neg_E\neg_EA\rightarrow \neg_E\neg_EB), A \vdash_\IP \neg_E\neg_EB;$ the result follows by $\rightarrow\textbf{I}$ again.

\end{proof}

We can now show the translation $f$ from $\calw_\EP$ into $\calw_\IP$ (Definition 1.7 of \cite{FlaFri86}) and discuss its properties.

\section{The F\&F-translation} 

Flagg and Friedman introduced, at the propositional level, the following translation (the above-said F\&F-translation).
 
\begin{definition}\label{def:ff-trans}
Let $\Gamma \neq \emptyset$ be a finite set of \IP -formulae and let $E \in \Gamma$. 
Then, for each formula $A$ in \EP, the formula $
f_\Gamma^{(E)}(A) = A_\Gamma^{(E)}$ of \IP~(simply denoted by $A^{(E)}$ or even by $f(A),$ when this is not ambiguous) is defined as follows, recursively on the structure of $A:$

\begin{itemize}
\item $A_\Gamma^{(E)} = \neg_E\neg_EA,$\ if $A$ is atomic;

\medskip
\item $(B\wedge C)_\Gamma^{(E)} = (B_\Gamma^{(E)}\wedge C_\Gamma^{(E)});$

\medskip 
\item $(B\vee C)_\Gamma^{(E)} = \neg_E\neg_E(B_\Gamma^{(E)} \vee C_\Gamma^{(E)});$

\medskip
\item $(B\rightarrow C)_\Gamma^{(E)} = (B_\Gamma^{(E)}\rightarrow C_\Gamma^{(E)});$

\medskip
\item $(\Box B)^{(E)}_\Gamma = \neg_E \neg_E \bigwedge_{C \in \Gamma}B^{(C)}_\Gamma.$
\end{itemize}

\end{definition}

\noindent
For the sake of completeness, we also report the two additional clauses needed for {\em predicate} formulas, which are not in this work.

\begin{itemize}
 \item $(\forall xA)^{(E)}_\Gamma = \forall xA^{(E)}_\Gamma;$

 \medskip
 \item $(\exists xA)^{(E)}_\Gamma = \neg_E\neg_E\exists xA^{(E)}_\Gamma.$
\end{itemize}

\noindent
The translation has two immediate consequences, which are used to simplify formulae in what follows: 

\begin{itemize}
\item $\bot^{(E)}_\Gamma \vdash_\IP\dashv E$;

\medskip
\item $(\neg A)^{(E)}_\Gamma \vdash_\IP\dashv \neg_E A^{(E)}_\Gamma.$
\end{itemize}

Even though they are missing from Flagg and Friedman's article, consider now some simple examples of how $f^{(E)}_\Gamma,$ works. 
Let $p,$ $q,$ and $r$ be propositional formulae and let $\Gamma = \{E, C\} \subseteq \calw_\IP.$

\begin{example} (Translating a formula with $\vee$ and $\wedge$)

\st
$\begin{array} {rl}
 (p\wedge(q\vee r))^{(E)}_\Gamma\ = \ & (\neg_E\neg_Ep\wedge\neg_E\neg_E(\neg_E\neg_Eq\vee \neg_E\neg_Er))\\
  & \vdash_\IP\dashv \ (\neg_E\neg_Ep\wedge\neg_E\neg_E(q\vee r)) \hspace{3,4cm} \hbox{(for Lemma \ref{le:2_neg_dis})}\\
  & \vdash_\IP\dashv \ \neg_E\neg_E(p\wedge (q\vee r)) \hspace{4,5cm} \hbox{(for Lemma \ref{le:2_neg_con})}
\end{array}$
\end{example}
\begin{example} (Translating a formula with $\Box$ but no $\rightarrow$)

\st
$\begin{array} {rlr}
(\Box p)^{(E)}_\Gamma \ = \ & \neg_E\neg_E (\neg_E\neg_Ep \wedge\neg_C\neg_Cp) & \\
& \vdash_\IP\dashv \ (\neg_E\neg_E\neg_E\neg_Ep \wedge\neg_E\neg_E\neg_C\neg_Cp) & \ \ \ \ \ \ \ \ \ \ \ \ \ \ \ \ \ \ \ \ \ \ \ \hbox{(for Lemma \ref{le:2_neg_con})}\\
& \vdash_\IP\dashv \ (\neg_E\neg_Ep \wedge\neg_E\neg_E\neg_C\neg_Cp) & \ \ \ \ \ \ \ \ \ \ \ \ \ \ \ \ \ \ \ \ \ \ \ \ \hbox{(for Lemma \ref{le:triple_neg})} \\
&  \vdash_\IP\dashv \ \neg_E\neg_Ep & \ \ \ \ \ \ \ \ \ \ \ \ \ \ \ \ \ \ \ \ \ \ \ \hbox{(for Lemma \ref{le:double_double})}
\end{array}$
\end{example}
\begin{example} (Translating a formula with at least one $\rightarrow$)

\st
$\begin{array} {r l}
((p\rightarrow q)\vee r)^{(E)}_\Gamma \ = \ & \neg_E\neg_E((\neg_E\neg_Ep\rightarrow \neg_E\neg_Eq)\vee \neg_E\neg_Er) \\
& \vdash_\IP\dashv \ \neg_E\neg_E(\neg_E\neg_E(\neg_E\neg_Ep\rightarrow \neg_E\neg_Eq)\vee \neg_E\neg_Er) \ \ \hbox{(for Lemma \ref{le:imp_double_neg})} \\
& \vdash_\IP\dashv \ \neg_E\neg_E((\neg_E\neg_Ep\rightarrow \neg_E\neg_Eq)\vee r) \ \ \ \ \ \ \ \ \ \ \ \ \ \ \ \ \ \hbox{(for Lemma \ref{le:2_neg_dis})}\\
& \vdash_\IP\dashv \ \neg_E\neg_E((p\rightarrow \neg_E\neg_Eq)\vee r) \ \ \ \ \ \ \ \ \ \ \ \ \ \ \ \ \ \ \ \ \ \ \ \ \ \hbox{(for Lemma \ref{!})}
\end{array}$
\end{example}



\medskip\noindent
Finally, note how F\&F-translated formulae enjoy the following important property.

\begin{lemma}\label{le:double_neg_elim}
\[
\neg_E \neg_E A^{(E)}_\Gamma \vdash_\IP\dashv A^{(E)}_\Gamma.
\]

\end{lemma}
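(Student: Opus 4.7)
The plan is to proceed by structural induction on the epistemic formula $A$. The right-to-left direction is immediate for \emph{any} formula, by Lemma \ref{le:double_neg} instantiated with $A^{(E)}_\Gamma$ in place of $A$, so the real work lies in the left-to-right direction, where we must show that each translated formula is already $E$-double-negation stable.

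For the base case, $A$ atomic, we have $A^{(E)}_\Gamma = \neg_E\neg_E A$, so the goal reduces to $\neg_E\neg_E\neg_E\neg_E A \vdash_\IP \neg_E\neg_E A$, which is an instance of Lemma \ref{le:triple_neg} with $\neg_E A$ in place of $A$. Exactly the same trick dispatches the disjunction and the $\Box$ cases, since by Definition \ref{def:ff-trans} both $(B \vee C)^{(E)}_\Gamma$ and $(\Box B)^{(E)}_\Gamma$ already carry a leading $\neg_E\neg_E$; prefixing another $\neg_E\neg_E$ yields four consecutive relative negations, which Lemma \ref{le:triple_neg} collapses back to two. The conjunction case is then handled by Lemma \ref{le:2_neg_con}: distribute the outer $\neg_E\neg_E$ across the conjuncts and invoke the induction hypothesis on both $B$ and $C$.

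The main obstacle, and really the only delicate case, is the implication clause, because $(B \rightarrow C)^{(E)}_\Gamma = B^{(E)}_\Gamma \rightarrow C^{(E)}_\Gamma$ wears no outer double $E$-negation that could be absorbed via Lemma \ref{le:triple_neg}. My plan here is the following chain of derivations starting from $\neg_E\neg_E (B^{(E)}_\Gamma \rightarrow C^{(E)}_\Gamma)$: first apply Lemma \ref{le:double_neg_imp} to push the double $E$-negation onto both sides of the implication, obtaining $\neg_E\neg_E B^{(E)}_\Gamma \rightarrow \neg_E\neg_E C^{(E)}_\Gamma$; next, use the right-to-left direction of Lemma \ref{!} to rewrite this as $B^{(E)}_\Gamma \rightarrow \neg_E\neg_E C^{(E)}_\Gamma$; finally, apply the induction hypothesis on $C$, which provides $\neg_E\neg_E C^{(E)}_\Gamma \vdash_\IP C^{(E)}_\Gamma$, to land at $B^{(E)}_\Gamma \rightarrow C^{(E)}_\Gamma = A^{(E)}_\Gamma$. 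It is precisely at this step that the ``useful lemmata'' collected in the preceding subsection---Lemma \ref{!} in particular---earn their keep.
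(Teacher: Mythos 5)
Your proof is correct and follows essentially the same route as the paper, which establishes the lemma by structural induction on $A$, using Lemma \ref{le:triple_neg} to collapse the four consecutive relative negations in the atomic, disjunction and $\Box$ cases and Lemma \ref{le:2_neg_con} for conjunction. The only (harmless) divergence is in the implication case, where the paper's sketch points to Lemma \ref{le:imp_double_neg} while you combine Lemma \ref{le:double_neg_imp} with Lemma \ref{!} and the induction hypothesis on the consequent only; both variants go through.
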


This result appears without proof in \cite{FlaFri86}; here it can be proved straightforwardly by induction on the structure of $A,$ thanks to Lemmata \ref{le:triple_neg} and \ref{le:2_neg_con} by Flagg and Friedman and our Lemma \ref{le:imp_double_neg}.

\subsection{The soundness issue}
Flagg and Friedman's main result is the following theorem (Th. 1.8 in \cite{FlaFri86}), 
which states the soundness of their translation in propositional calculus.

\begin{theorem}\label{th:ff-correct}
Let $A_1, \dots , A_n, B \in \calw_\EP.$ If
\[
A_1, \dots , A_n \vdash_\EP A,
\]

\ni
then, for any finite $\Gamma \subseteq \calw_\IP$ and for any $E \in \Gamma,$
\[
A_{1~\Gamma}^{(E)}, \dots , A_{n~\Gamma}^{(E)} \vdash_\IP A_\Gamma^{(E)}.
\]

\end{theorem}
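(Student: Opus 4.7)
The plan is to prove the theorem by induction on the length of a derivation $\mathcal{D}$ witnessing $A_1,\dots,A_n \vdash_\EP A$, taking the statement as uniformly quantified over all $\Gamma$ and $E \in \Gamma$, so that the inductive hypothesis may be re-instantiated at a different ``base'' negation parameter when needed. The base case (where $A$ is one of the $A_i$) is immediate. For each inductive step I would split on the last rule of $\mathcal{D}$.

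The straightforward cases are those where the translation commutes with the principal connective. For $\wedge\textbf{I}$, $\wedge\textbf{E}$, $\rightarrow\textbf{I}$ and $\rightarrow\textbf{E}$, the required IP-derivation is obtained by applying the same rule to the translated subderivations. For $\vee\textbf{I}$, after the IP-step I prepend $\neg_E\neg_E$ via Lemma \ref{le:double_neg}. For $\vee\textbf{E}$ I carry out the disjunction elimination under the outer $\neg_E\neg_E$ of $\neg_E\neg_E(B_0^{(E)}\vee B_1^{(E)})$: an application of Lemma \ref{le:2_neg_intro} (together with Lemma \ref{le:double_neg}) yields $\neg_E\neg_E B^{(E)}$, and the target $B^{(E)}$ follows from Lemma \ref{le:double_neg_elim}. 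The rules $\bot_{\mathbf c}$ and $\Box\textbf{E}$ are handled similarly: for $\bot_{\mathbf c}$, the IH on the subderivation of $\bot$ gives $\neg_E A^{(E)}\vdash_\IP E$, hence $\vdash_\IP \neg_E\neg_E A^{(E)}$ by $\rightarrow\textbf{I}$, and a final use of Lemma \ref{le:double_neg_elim} yields $A^{(E)}$; for $\Box\textbf{E}$, from $(\Box A)^{(E)}_\Gamma = \neg_E\neg_E\bigwedge_{C\in\Gamma}A^{(C)}_\Gamma$ one projects onto $A^{(E)}_\Gamma$ (since $E\in\Gamma$) under the $\neg_E\neg_E$ by Lemma \ref{le:2_neg_intro}, and then collapses the double negation by Lemma \ref{le:double_neg_elim}.

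The main obstacle, and the crux of the whole argument, is the case of $\Box\textbf{I}$. Suppose the last step of $\mathcal{D}$ derives $\Box A$ from a subderivation $\mathcal{D}'$ of $A$ whose open assumptions are exclusively of the form $\Box B_1,\dots,\Box B_k$. The goal is
\[
(\Box B_1)^{(E)}_\Gamma,\dots,(\Box B_k)^{(E)}_\Gamma \;\vdash_\IP\; \neg_E\neg_E\bigwedge_{C\in\Gamma}A^{(C)}_\Gamma.
\]
The key observation is that the inner conjunction $\bigwedge_{D\in\Gamma}B_i^{(D)}_\Gamma$ inside each $(\Box B_i)^{(E)}_\Gamma$ does not depend on the outer parameter. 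Applying the IH to $\mathcal{D}'$ with parameter $C$ (for each $C\in\Gamma$) gives $(\Box B_1)^{(C)}_\Gamma,\dots,(\Box B_k)^{(C)}_\Gamma\vdash_\IP A^{(C)}_\Gamma$; and since from $\bigwedge_{D\in\Gamma}B_i^{(D)}_\Gamma$ one obtains $(\Box B_i)^{(C)}_\Gamma=\neg_C\neg_C\bigwedge_{D\in\Gamma}B_i^{(D)}_\Gamma$ by Lemma \ref{le:double_neg}, one gets $\{\bigwedge_{D\in\Gamma}B_i^{(D)}_\Gamma\}_{i\le k}\vdash_\IP A^{(C)}_\Gamma$ for each $C$; $\wedge\textbf{I}$ then yields $\{\bigwedge_{D\in\Gamma}B_i^{(D)}_\Gamma\}_{i\le k}\vdash_\IP \bigwedge_{C\in\Gamma}A^{(C)}_\Gamma$.

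To finish, I would lift this derivation under $\neg_E\neg_E$ using a multi-assumption form of Lemma \ref{le:2_neg_intro}, namely: if $X_1,\dots,X_k\vdash_\IP Y$ then $\neg_E\neg_E X_1,\dots,\neg_E\neg_E X_k\vdash_\IP \neg_E\neg_E Y$ (provable by iterating Lemma \ref{le:2_neg_intro} after a chain of $\rightarrow\textbf{I}$'s, via Lemma \ref{le:double_neg_imp}). Instantiating $X_i=\bigwedge_{D\in\Gamma}B_i^{(D)}_\Gamma$ and $Y=\bigwedge_{C\in\Gamma}A^{(C)}_\Gamma$ delivers exactly the desired conclusion. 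The delicate point throughout is to keep the inductive statement uniform in $E$, so that the $\Box\textbf{I}$ step may simultaneously exploit $|\Gamma|$ distinct instances of the IH; this is precisely what is needed to sidestep the failure, documented in Section \ref{inadmissibility}, of the necessitation rule under the translation.
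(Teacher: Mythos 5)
Your proof is correct and follows essentially the same route as the paper's: induction on the derivation with the statement strengthened to hold uniformly for every parameter $C\in\Gamma$, so that the $\Box\:\textbf{I}$ case can invoke $|\Gamma|$ instances of the inductive hypothesis and exploit the fact that the inner conjunction of $(\Box B)^{(E)}_\Gamma$ does not depend on $E$. The only (immaterial) difference is the bookkeeping in that case: the paper first distributes $\neg_E\neg_E$ over the target conjunction and bridges the hypotheses with Lemma \ref{le:double_double}, whereas you strip the outer double relative negations from the hypotheses via Lemma \ref{le:double_neg}, conjoin the conclusions, and lift the whole sequent under $\neg_E\neg_E$ at the end.
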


This theorem can be proved, as Flagg and Friedman suggest, by induction on length of derivations; however, they do not prove it explicitly.
We find that the proof is not exactly trivial and their suggestion might easily be misunderstood.


\ENDCOMMENT

\section{Inadmissibility of the necessitation rule}\label{inadmissibility}
If we were to prove Th. \ref{th:ff-correct} above by a plain induction on length of derivation, then, according to a standard procedure, we would expect that, for any finite $\Gamma \subseteq \calw_\IP$ and any $E \in \Gamma,$ the translation $(\cdot)^{(E)}_\Gamma$ would transform 
any \EP -primitive inference rule into an \IP -admissible inference rule. 
It can be easily proved for all \EP -primitive rules but for the $\Box$-introduction rule (denoted by $\Box \:\textbf{I}$): 

\begin{center}
$ \Box\:\textbf{I} \ \ \frac{A}{\Box A}$ ~ if all open assumptions are of the form $\Box B.$
\end{center}

\COMMENT
Consider, e.g., the $\vee$-introduction rule for \EP. 
Here is a proof of the admissibility of $\vee \textbf{I}^{(E)}_\Gamma$ for arbitrary $E$ and $\Gamma.$

\begin{enumerate}
	\item $\Delta^{(E)} \vdash_\IP \neg_E\neg_E(A^{(E)}\vee B^{(E)}$~~~~~~(for hypothesis).
	\item $\Phi^{(E)}, A^{(E)} \vdash_\IP C^{(E)}$~~~~~~(for hypothesis).
	\item $\Sigma^{(E)}, B^{(E)} \vdash_\IP C^{(E)}$~~~~~~(for hypothesis).
	\item $\Phi^{(E)}, \Sigma^{(E)}, A^{(E)}\vee B^{(E)} \vdash_\IP C^{(E)}$~~~~~~(by $\vee\textbf{E}$ from \textit{2} and \textit{3}).
	\item $\Phi^{(E)}, \Sigma^{(E)}, \neg_E\neg_E (A^{(E)}\vee B^{(E)}) \vdash_\IP \neg_E\neg_E C^{(E)}$~~~~~~(for Lemma \ref{le:2_neg_intro}).
	\item $\Delta^{(E)}, \Phi^{(E)}, \neg_E\neg_E(A^{(E)}\vee B^{(E)}) \vdash_\IP C^{(E)}$~~~~~~(for Lemma \ref{le:double_neg} and composition).
	\item $\Delta^{(E)}, \Phi^{(E)}, \Sigma^{(E)}\vdash_\IP C^{(E)}$~~~~~~(by composition from \textit{1} and \textit{6}).
\end{enumerate}

\ENDCOMMENT

\ni
In fact, there is at least one counter-example to the admissibility of every $(\Box \:\textbf{I})_\Gamma^{(E)}$ in \IP, 
as we show now.

\begin{theorem}\label{pr:counter_ex}
Let $B$ and $C$ be arbitrary atomic formulae (including $\bot$) and let $E$ be a propositional letter distinct from $B$ and $C$; moreover, let $\Gamma = \{C, E\}.$ 
If $A = (E\rightarrow B),$ then 

\[
\vdash_\IP A_\Gamma^{(E)},
\]

\ni
while

\[
\not\vdash_\IP (\Box A)_\Gamma^{(E)}.
\]
\end{theorem}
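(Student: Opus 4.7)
The plan is to check both halves by direct computation of the translation, with Part~2 depending on an algebraic (Kripke) refutation.

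For the provability of $A_\Gamma^{(E)}$, unfolding the atomic clauses with $A=E\to B$ gives $A_\Gamma^{(E)} = \neg_E\neg_E E \to \neg_E\neg_E B$. Since $\neg_E E$ is just the identity $E\to E$, we have $\vdash_\IP \neg_E E$; so from an assumed $\neg_E\neg_E E = \neg_E E\to E$ one derives $E$ by $\rightarrow\textbf{E}$, and then $\neg_E\neg_E B = (B\to E)\to E$ follows by $\rightarrow\textbf{I}$ after immediately discharging an assumption $B\to E$. A final $\rightarrow\textbf{I}$ yields $\vdash_\IP A_\Gamma^{(E)}$.

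For the unprovability of $(\Box A)_\Gamma^{(E)}$, note that by definition $(\Box A)_\Gamma^{(E)} = \neg_E\neg_E\bigl(A_\Gamma^{(E)}\wedge A_\Gamma^{(C)}\bigr)$ with $A_\Gamma^{(C)} = \neg_C\neg_C E \to \neg_C\neg_C B$. Lemma \ref{le:2_neg_con} makes this intuitionistically equivalent to $\neg_E\neg_E A_\Gamma^{(E)} \wedge \neg_E\neg_E A_\Gamma^{(C)}$; since the first conjunct is provable by the previous paragraph together with Lemma \ref{le:double_neg}, the question reduces to showing $\not\vdash_\IP \neg_E\neg_E A_\Gamma^{(C)}$.

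I would refute this using the two-world Kripke counter-model $w_0 < w_1$ in which $E$ is forced only at $w_1$ while $B$ and $C$ are forced nowhere (this corresponds to a three-element Heyting algebra of upward-closed sets). A direct verification shows that $\neg_C E$ fails at every world (because $E$ becomes true at $w_1$ without $C$), so $\neg_C\neg_C E$ is vacuously true at both worlds; whereas $\neg_C B$ holds everywhere (because $B$ is never forced), yet $\neg_C\neg_C B$ fails at both worlds since $C$ is never forced. Consequently $A_\Gamma^{(C)}$ is false at both worlds, $\neg_E A_\Gamma^{(C)}$ is vacuously true at both worlds, and $\neg_E\neg_E A_\Gamma^{(C)}$ fails at $w_0$ because its antecedent is forced there while $E$ is not. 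By intuitionistic soundness, $\not\vdash_\IP \neg_E\neg_E A_\Gamma^{(C)}$, hence $\not\vdash_\IP (\Box A)_\Gamma^{(E)}$.

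The main obstacle is calibrating the counter-model so that $\neg_C\neg_C E$ holds while $\neg_C\neg_C B$ fails at the root, without the outer $\neg_E\neg_E$ collapsing to $\top$. Placing $E$ only at the accessible world $w_1$ (and keeping $B$, $C$ everywhere false) is what simultaneously breaks the inner implication $A_\Gamma^{(C)}$ and prevents the outer double $E$-negation from trivialising at $w_0$.
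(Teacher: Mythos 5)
Your proof is correct and follows essentially the same route as the paper: both halves reduce $(\Box A)_\Gamma^{(E)}$ via the distribution of $\neg_E\neg_E$ over $\wedge$ to the unprovability of $\neg_E\neg_E A_\Gamma^{(C)}$, which is then refuted semantically. Your two-world Kripke model is just the paper's totally ordered Heyting algebra (with $b=c=0$ and $0<e<1$) read as a frame of up-sets, so the counter-model is the same object in different clothing.
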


\begin{proof}
Clearly $A_\Gamma^{(E)} \vdash_\IP\dashv E\rightarrow \neg_E\neg_EB,$ 
hence $A_\Gamma^{(E)}$ is an \IP -theorem. 
Let us then prove that, in contrast, $(\Box A)_\Gamma^{(E)}$ is not. 
Since $(\Box A)_\Gamma^{(E)} = \neg_E\neg_E(A^{(C)}\wedge A^{(E)}),$ thanks 
to Lemma 1.6/\textit{(v)} of \cite{FlaFri86} and the $\wedge$-elimination rule 
it suffices to prove that $\not\vdash_\IP \neg_E\neg_E A^{(C)}.$ 
We shall show algebraically that $\neg_E\neg_E A^{(C)}$ is not intuitionistically valid, namely:
\[
\not\models_\IP \neg_E\neg_E(\neg_C\neg_C E\rightarrow \neg_C\neg_C B);
\]

\ni
thanks to the soundness of \IP\ wrt. the algebraic semantics, we obtain the result.

In fact, let $\mathfrak{H} = (H, \leq)$ be a totally-ordered Heyting algebra%
\footnote{%
A Heyting algebra is a bounded lattice $\mathfrak{G} = (G; \preceq; \cup, \cap; 0, 1)$ (where $G$ is the subjacent set of $\mathfrak{G},$ the relation $\preceq$ is the characteristic order of the lattice, $\cup$ and $\cap$ are the usual {\em supremum} and {\em infimum} operators respectively, 0 and 1 the top and the bottom of $\mathfrak{G}$ respectively) such that, for any $x, y \in G,$ there exists the \textit{relative pseudo-complement} of $x$ w.r.t. $y$, denoted by $x\triangleright y,$ i.e. there exists a $z \in G$ such that, for any $w \in G$, we have: $w \preceq z\ \Longleftrightarrow \ w\cap x \preceq y$.
} 
(i.e. a chain) such that $b, c, e \in H$ and $0 \leq b \leq c < e < 1$. 
Now, let $v$ be a valuation in $H$ (i.e. a function from the set of propositional letters to $H$) such that, denoting by $v_\mathfrak{H}$ the standard extension of $v$ to the \IP-formulae w.r.t. $\mathfrak{H},$ we obtain $v_\mathfrak{H}(B) = b$, $v_\mathfrak{H}(C) = c,$ and $v(E) = e$. 
Then 

\[ 
v_\mathfrak{H}(\neg_E\neg_EA^{(C)}) = (e\triangleright c\triangleright c)\triangleright(b\triangleright c\triangleright c)\triangleright e\triangleright e = e \neq 1.
\]
 
\ni
Hence $v_\mathfrak{H}(\neg_E\neg_EA^{(C)}) \neq 1$. 
As a result, $\not\models_\IP \neg_E\neg_EA^{(C)}$. 
\end{proof}

\section{A non-trivial proof of soundness}\label{soundness_proof}
We shall propose a new, explicit proof for Th. \ref{th:ff-correct}, with a focus on the 
treatment of the necesitation rule.

\begin{proof}
Let $\delta$ be an $\EP$ -derivation, in one arbitrary step of which a formula $A$ is asserted under the assumptions $A_1, \dots , A_n$. Also, let $\Gamma$ be an arbitrary finite set of $\IP$-formulae. 
We suppose, as induction hypothesis, that for each $C \in \Gamma$ the asserted formula in any previous step of $\delta$, once translated w.r.t. $C$ and $\Gamma,$ is provable in \IP~from the translations of the open assumptions concerning that step of $\delta$.

\COMMENT
Let then $\delta$ be an \EP-derivation, of length $l$ (let us imagine here natural deduction in a sequential presentation), such that the last step of $\delta$ consists in the assertion of $B$ under the assumptions $A_1, \dots , A_n.$
Next, choose arbitrarily $k \leq l$ and a finite $\Gamma \subseteq \calw_\IP.$

We claim that, for all $C \in \Gamma,$ the formula asserted in the $k$-th step of $\delta,$ once it gets translated w.r.t. $C$ and $\Gamma,$ turns out to be provable in \IP~from the the translations of the open assumption of that step of $\delta,$ under the induction hypothesis that, for all $C \in \Gamma,$ all steps in $delta$ preceding that 
are such.

Consider the case that the $k$-th step in $\delta$ is obtained from an earlier step, say the $i$-th one with $i < k,$ through the rule $\Box\:\textbf{I}$. 
Let $\Delta$ and $A$ be the set of open assumptions and the asserted formula at the $i$-th step of $\delta,$ respectively.
\ENDCOMMENT

Consider the case that the assertion of $A$ is obtained--through the $\Box\:{\rm\bf I}$ rule--from an earlier step of $\delta.$
Let then $B_1, \dots, B_n, B$ be such that $A_1 = \Box B_1, \dots, A_n = \Box B, A = \Box B.$
We claim that, for any $E \in \Gamma,$

\[
\neg_E\neg_E\bigwedge_{C \in \Gamma}B^{(C)}_{1~\Gamma}, \dots, \neg_E\neg_E\bigwedge_{C \in \Gamma}B^{(C)}_{n~\Gamma} \vdash_\IP \neg_E\neg_E\bigwedge_{C \in \Gamma}B^{(C)}_\Gamma.
\]

\ni
To prove it, thanks to Lemma 
1.6/ \textit{iv} of \cite{FlaFri86} it suffices to prove, by fixing an arbitrary formula $D \in \Gamma,$ 
that $\neg_E\neg_EB^{(D)}_\Gamma$ is provable from the above-mentioned assumptions. 

\ni
\COMMENT
Suppose now that $\Delta = \{\Box B_1, \dots ,\Box B_m\}$ 
with $m \geq 1$ (as the case $\Delta = \emptyset$ is straightforward).
\ENDCOMMENT

\ni
By the induction hypothesis 
we have (let us omit $\Gamma$ from now on) 

\[
\neg_D\neg_D\bigwedge_{C}B_1^{(C)}, \dots , \neg_D\neg_D\bigwedge_{C}B_n^{(C)} \vdash_\IP B^{(D)},
\]

\ni
from which, by 
Lemma 1.6/\textit{(vii)} we obtain:

\[
\neg_E\neg_E\neg_D\neg_D\bigwedge_{C}B_1^{(C)}, \dots , \neg_E\neg_E\neg_D\neg_D\bigwedge_{C}B_n^{(C)} \vdash_\IP \neg_E\neg_EB^{(D)}.
\]

\ni 
Now, for all $i$ 
such that $1 \leq i \leq n$, 
by Lemma 1.6/\textit{(i)}, \textit{(vii)} we have the following:

\[
\neg_E\neg_E\bigwedge_{C}B_i^{(C)} \vdash_\IP \neg_E\neg_E\neg_D\neg_D\bigwedge_{C}B_i^{(C)}.
\]

\ni
Finally, by composition we get to:

\[
\neg_E\neg_E\bigwedge_{C}B_1^{(C)}, \dots , \neg_E\neg_E\bigwedge_{C}B_n^{(C)} \vdash_\IP \neg_E\neg_EB^{(D)}.
\] 

\hfill$\Box$

\end{proof}

\COMMENT
The following corollary (listed as Th. 1.9 in \cite{FlaFri86}) asserts that $f^{(E)}_\Gamma$ is sound for every $\Gamma$ and every $E$ (notice how the corollary is in fact equivalent to Th. \ref{th:ff-correct} above).

\begin{corollary}\label{co:ff-correct}
For each $A \in \calw_\EP$, if
\[
\vdash_\EP A,
\]

\ni 
then, for any finite $\Gamma \in \calw_\IP$ and any $E \in \Gamma$,
\[
\vdash_\IP A^{(E)}_\Gamma.
\]
\end{corollary}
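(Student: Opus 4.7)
The plan is to proceed by induction on the length of the \EP-derivation, with the crucial strengthening that the induction hypothesis asserts the conclusion uniformly for every choice of $E \in \Gamma$ \emph{simultaneously}. This uniformity is essential: since the translation of $\Box B$ quantifies over all $C \in \Gamma$, the $\Box\textbf{I}$ step will need to invoke the IH at several different values of $E$ at once. Without this reformulation, the standard rule-by-rule induction fails, exactly as Section \ref{inadmissibility} shows.

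I would dispatch the routine cases first. The assumption/identity step is immediate. The rules $\wedge\textbf{I}$, $\wedge\textbf{E}$, $\rightarrow\textbf{I}$, $\rightarrow\textbf{E}$ transfer verbatim because $(\cdot)^{(E)}_\Gamma$ commutes with $\wedge$ and $\rightarrow$ by Definition \ref{def:ff-trans}. For $\vee\textbf{I}$, the IH yields $B_i^{(E)}$, and one application of $\vee\textbf{I}$ plus Lemma \ref{le:double_neg} produces $\neg_E\neg_E(B_0^{(E)}\vee B_1^{(E)})$. For $\vee\textbf{E}$, the three induction hypotheses combine via an inner $\vee\textbf{E}$; Lemma \ref{le:2_neg_intro} discharges the outer $\neg_E\neg_E$, and Lemma \ref{le:double_neg_elim} absorbs the residual $\neg_E\neg_E$ wrapping on the conclusion. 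For $\Box\textbf{E}$, I fix the component $C=E$ in the translated premise $\neg_E\neg_E\bigwedge_{C\in\Gamma}B^{(C)}$, extract $\neg_E\neg_E B^{(E)}$ under $\neg_E\neg_E$ via Lemma \ref{le:2_neg_intro} and $\wedge\textbf{E}$, and collapse by Lemma \ref{le:double_neg_elim}. The classical rule $\bot_{\textbf{c}}$ is handled by exploiting $\bot_\Gamma^{(E)} \vdash_\IP\dashv E$ together with the double-negation stability of any translated formula (Lemma \ref{le:double_neg_elim}), so that classical reductio reduces to $\rightarrow\textbf{I}$ followed by elimination of $\neg_E\neg_E$.

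The main obstacle, and the whole reason the proof is non-trivial, is the necessitation rule $\Box\textbf{I}$. Here is where the strengthened IH pays off. Given a step asserting $\Box B$ from $B$ under open assumptions $\Box B_1,\dots,\Box B_n$, the goal is $\neg_E\neg_E\bigwedge_{C\in\Gamma}B^{(C)}$ from the hypotheses $\neg_E\neg_E\bigwedge_{C\in\Gamma}B_i^{(C)}$. I would fix an arbitrary $D\in\Gamma$ and apply the IH \emph{at $D$} to the earlier sub-derivation of $B$, obtaining $\neg_D\neg_D\bigwedge_{C}B_i^{(C)} \vdash_\IP B^{(D)}$ for all $i$. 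Wrapping this derivation in $\neg_E\neg_E$ via Lemma \ref{le:contraposition} gives $\neg_E\neg_E B^{(D)}$ from assumptions $\neg_E\neg_E\neg_D\neg_D\bigwedge_{C}B_i^{(C)}$, and Lemma \ref{le:double_double} converts the actual hypotheses $\neg_E\neg_E\bigwedge_{C}B_i^{(C)}$ into exactly this form. Since $\Gamma$ is finite and $D$ was arbitrary, aggregating over $D\in\Gamma$ with $\wedge\textbf{I}$ and a final $\neg_E\neg_E$-wrap (Lemma \ref{le:double_neg}) produces the required conclusion. The hard point, to emphasize, is not any single calculation but the conceptual move of enriching the induction hypothesis to range over all $E\in\Gamma$ at once; once this is done, the $\Box\textbf{I}$ step goes through uniformly, bypassing the counter-example of Theorem \ref{pr:counter_ex}.
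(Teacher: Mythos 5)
Your proposal is correct and follows essentially the same route as the paper: the corollary is the $n=0$ instance of Theorem~\ref{th:ff-correct}, which the paper likewise proves by induction on the derivation with the induction hypothesis strengthened to hold for every $C\in\Gamma$ simultaneously, handling $\Box\:\textbf{I}$ exactly as you do (fix an arbitrary $D\in\Gamma$, apply the hypothesis at $D$, wrap the resulting deduction in $\neg_E\neg_E$, and convert the premises via Lemma~\ref{le:double_double}). The only divergences are cosmetic lemma citations --- the wrapping step is the double $E$-negation rule of Lemma~\ref{le:2_neg_intro} rather than Lemma~\ref{le:contraposition}, and the final aggregation over $D$ rests on Lemma~\ref{le:2_neg_con} rather than a bare $\neg_E\neg_E$-wrap via Lemma~\ref{le:double_neg} --- but the underlying steps are all valid.
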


\section{Relation to literature}\label{sec:lit}
Among the first to study the epistemic extensions of classical theories, Shapiro \cite{Sha85b} introduced Epistemic Arithmetic \EA~and observed that the \G~translation $T$ is sound w.r.t. \HA~and \EA.
At about the same time, Goodman \cite{Goo84a} proved the faithfulness of $T$ for arithmetic via a proof-theoretical method; that was also proved ---syntactically--- for versions of set theory and type theory, albeit with a different proof method for each formal system.
The goal of Flagg and Friedman \cite{FlaFri86} was to find a uniform method, 
precisely one that was based on the soundness of their translation \textit{f.}

\subsection{The $A$-translation and its connection with the F\&F-translation}
The F\&F-translation is closely related to Friedman's better-known $A${\em-translation} $(\cdot)_A$ -- or {\em disjunctive} translation w.r.t. an intuitionistic formula $A:$ it is an internal operation on the set of intuitionistic formulae that he had previously proposed in \cite{Fri78} in order to prove that Markov's inference rule is admissible in several intuitionistic theories, in particular \HA.

As Flagg and Friedman pointed out%
\footnote{%
Please see \cite{FlaFri86}, p. 60.
},
the $A$-translation can be described as a special case of their translation $f^{(E)}_\Gamma.$ 
Strictly speaking, $f^{(E)}_\Gamma$ is (modulo logical equivalence) an extension to epistemic logic -- w.r.t. the finite set $\Gamma$ of intuitionistic formulae -- of the composition of the \G-Gentzen {\em double-negation} translation from classical into intuitionistic logic with the $A$-translation obtained by setting the formula $E$ as $A.$ 
This fact can easily be proved by induction on a generic argument of the \G-Gentzen translation. 
Notice, moreover, how the $A$-translation is classically equivalent to the F\&F-translation restricted to classical formulae and, in general, the double negative negation w.r.t. $E$ is classically equivalent to the simple disjunction with $E.$


The F\&F-translation, in its propositional version, has been treated in literature w.r.t. its properties, or lack of them. 
Some properties have been disproved, particularly by Fernandez \cite{Fer06} and by Inou\'e \cite{Ino92}, who have produced the counter-examples that we shall discuss next.
 
\subsection{Unfaithfulness 
of the F\&F-translation in literature} 
We have shown above a proof of soundness for every propositional translation $f^{(E)}_\Gamma.$ 
One may wonder if, moreover, all $f^{(E)}_\Gamma$s are faithful.
It remains relatively easy to see that this is not the case; the following counter-example is borrowed from Fernandez%
\footnote{
Fernandez \cite{Fer06} then proposed an alternative polynomial translation from \EP~to \IP, which he proved to be faithful.
}. 

Fix an arbitrary theorem of \IP, e.g. $\top = \bot\rightarrow \bot,$ as $E,$ and an arbitrary extension of $\{E\}$ as $\Gamma.$
Now, let $p$ be an arbitrary propositional letter; then $p^{(E)} = 
((p\rightarrow \top)\rightarrow \top).$
Hence we have $\vdash_\IP p^{(E)}$, whereas obviously $\not\vdash_\EP p;$ from that we can infer that $f^{(E)}_\Gamma$ is unfaithful.

It could also be interesting to check whether our conjecture about the faithfulness of the F\&F-translation may hold {\em in a weaker form:} e. g., would the opposite of Corollary \ref{co:ff-correct} hold? 
In other words, for any $A \in \calw_\EP$ can we prove that
\[ \vdash_\IP A^{(E)}_\Gamma~\hbox{for each finite}~\Gamma~\hbox{and each}~E\in \Gamma\ \Longrightarrow \ \vdash_\EP A? \]
 
Even in this weaker form, the conjecture does not hold, as Inou\'e proved by means of the following counter-example \cite{Ino92}. Let

\[
A = p\rightarrow \Box p.
\]

\ni 
Thanks to lemmata proved earlier on, in particular Lemma \ref{le:double_double}, it remains easy to prove that $\vdash_\IP A^{(E)}_\Gamma,$ irrespective of $\Gamma$ and $E,$ while unfortunately $\not\vdash_\EP A$.
Inou\'e's result prompts us to argue that no translation $f^{(E)}_\Gamma$ may be faithful.

\section{Conclusions and future research}\label{sec:conc}
The method proposed by Flagg and Friedman is appealing, as it would prove that the \G~translation $T$ is faithful for various intuitionistic formal systems mapped onto the corresponding epistemic ones. 
The two authors aimed at finding a counter-translation of the epistemic image of an intuitionistic theory back into the intuitionistic theory itself, and to prove that it was at the same time sound and inverse (modulo logical equivalences) to \textit{T.} 
To do so, they had had to rely on the soundness of the F\&F-translation $f$ (relative to any intuitionistic formula $E$ and any finite set $\Gamma).$ 


We have examined their translation for the propositional case and found out the inadmissibility of the (translated) necessitation rule. 
This result prompted us to propose a detailed proof of the soundness of \textit{f,} as it was left implicit in their article \cite{FlaFri86}. 
Our results suggest that also the soundness proof for the F\&F-translation of {\em predicate} formulae should be reconsidered, in order to ensure that Flagg and Friedman's method for proving the faithfulness of the \G~translation is suitable for predicate logic and theories.


\ENDCOMMENT


\bibliographystyle{elsarticle-num}
\bibliography{../materials}
\end{document}